\title{Time Space Optimal Algorithm for Computing Separators in Bounded Genus Graphs}
\author{Chetan Gupta}{Indian Institute of Technology Kanpur,  India}{gchetan@cse.iitk.ac.in}{}{Visvesvaraya PhD Grant}
\author{Rahul Jain}{Indian Institute of Technology Kanpur, India}{jain@cse.iitk.ac.in}{}{Ministry of Human Resource Development, Government of India}
\author{Raghunath Tewari}{Indian Institute of Technology Kanpur, India}{rtewari@cse.iitk.ac.in}{}{DST Inspire Faculty Grant and Visvesvaraya Young Faculty Fellowship}
\authorrunning{C. Gupta, R. Jain and R. Tewari}
\keywords{Graph algorithms, space-bounded algorithms, surface embedded graphs, reachability, Euler genus, algorithmic graph theory, computational complexity theory}
\newcommand{\dist}[2]{\textsf{dist}(#1, #2)}
\newcommand{\vor}[1]{\textsf{vor}(#1)}
\newcommand{\core}[1]{\textsf{core}(#1)}
\newcommand{\mframe}[1]{\textsf{mframe}(#1)}
\newcommand{\fgraph}[1]{\textsf{frame}(#1)}
\newcommand{\pfcycle}[1]{\textsf{pfloop}(#1)}
\newcommand{\boss}[1]{\textsf{boss}(#1)}
\newcommand{\head}[1]{\textsf{head}(#1)}
\newcommand{\rev}[1]{\textsf{rev}(#1)}
\newcommand{\tail}[1]{\textsf{tail}(#1)}
\newcommand{\fcycle}[1]{\textsf{fcycle}(#1)}
\newcommand{\con}[1]{\textsf{con}(#1)}
\newcommand{\pfloop}[1]{\textsf{pfloop}(#1)}
\newcommand{\smext}[1]{\textsf{smext}(#1)}
\newcommand{\smint}[1]{\textsf{smint}(#1)}
\newcommand{\ind}[1]{\textsf{ind}(#1)}
\newcommand{\first}[1]{\textsf{first}(#1)}
\newcommand{\last}[1]{\textsf{last}(#1)}
\newcommand{\lface}[1]{\textsf{left}(#1)}
\newcommand{\rface}[1]{\textsf{right}(#1)}
\newcommand{\nrst}{\textsf{nrst}}
\begin{document}
	
	\maketitle
	
	\begin{abstract}
		A graph separator is a subset of vertices of a graph whose removal divides the graph into small components. Computing small graph separators for various classes of graphs is an important computational task. In this paper, we present a polynomial time algorithm that uses $O(g^{1/2}n^{1/2}\log n)$-space to find an $O(g^{1/2}n^{1/2})$-sized separator of a graph having $n$ vertices and embedded on a surface of genus $g$.
	\end{abstract}
	\section{Introduction}
	\label{sec:intro}
	Graph separator is a useful tool in designing divide and conquer based algorithms for various graph problems. In a graph, a separator is a small set of vertices of the graph whose removal divides the graph into pieces such that the size of each piece is at most a fraction of the original graph. Lipton and Tarjan's pioneering result showed that there exists a separator of size $O(n^{1/2})$ in planar graphs. Subsequently, this separator was used to design a large number of algorithms to solve various problems in planar graphs.
	
	Recently, researchers have been interested in designing memory-constrained algorithms for various graph problems. The goal in designing such algorithms is to optimize the space required by the algorithm while maintaining the polynomial time-bound. Graph separators have been used in designing memory-constrained algorithms for the \textit{reachability} problem. Imai et al. and Ashida et al. presented polynomial-time algorithms that use $O(n^{1/2}\log n)$ space to find a separator of size $O(n^{1/2})$ in a planar graph  \cite{Imai,Imaiplanarsep}. Imai et al. also gave a memory-constrained algorithm to solve the reachability problem using this separator \cite{Imai}. A natural extension planar graphs is the set of graphs that we can embed on a surface of constant \textit{genus}. For such graphs, we know that a separator of size $O(n^{1/2})$ exists. Chakraborty et al. gave a polynomial time algorithm which uses $O(n^{2/3}\log n)$ space to construct a separator of size $O(n^{2/3})$ in constant genus graphs \cite{fsttcstewari}.
	
	Jain and Tewari formalized the connection between separators in a class of undirected graphs and the reachability problem in the class of directed versions of those graphs \cite{mytreewidth}. They mainly show that if there exists a polynomial-time algorithm that uses $O(w\log n)$-space to find a separator of size $O(w)$, then there exists a polynomial-time algorithm that uses $O(w\log n)$ space to solve reachability as well.
	
	In this paper, we continue along the above line of work and present a polynomial-time algorithm that uses $O(g^{1/2}n^{1/2}\log n)$ space to construct a separator of size $O(g^{1/2}n^{1/2})$ in a $g$-genus graph. Therefore, combining this construction with \cite{mytreewidth} we get an $O(g^{1/2}n^{1/2} \log n)$ space and polynomial time algorithm to solve the reachability problem in $g$-genus graphs. Thus, for constant genus graphs, our approach gives a polynomial time algorithm that uses $(n^{1/2}\log n)$ space.
	
	Our construction of separator follows standard paradigm used in previous constructions of separators for planar graphs and surface-embedded graphs. Hence some familiarity with earlier results such as those shown by Gazit and Miller $\cite{garrymiller}$, Koutis and Miller $\cite{koutismiller}$, Imai et al. $\cite{Imai}$, Ashida et al. $\cite{Imaiplanarsep}$, and Chakraborty et al. $\cite{fsttcstewari}$ is beneficial in understanding our construction. In particular, since our result is a generalization of Ashida et al. \cite{Imaiplanarsep}, we heavily borrow their framework.
	
	\subsection*{Our Result}	
	In this paper, we prove the following theorem.
	
	\begin{theorem}
		\label{thm:main}
		Given a $g$-genus graph $G$ of $n$ vertices, there exists a polynomial time algorithm that uses $O(g^{1/2}n^{1/2} \log n)$ space to find a separator of size $O(g^{1/2}n^{1/2})$ in $G$.
	\end{theorem}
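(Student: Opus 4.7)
The plan is to adapt the Voronoi-based planar separator construction of Ashida et al.\ \cite{Imaiplanarsep} to a graph $G$ embedded on an orientable surface of genus $g$. Throughout, the embedding is supplied as a rotation system and treated as read-only input.

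\textbf{BFS layering.} First, I would run a small-space BFS from an arbitrary root and check whether a single BFS level of size at most $O(g^{1/2} n^{1/2})$ already breaks $G$ into components of size at most $2n/3$; if so, output it as the separator. Otherwise every level has size $\Omega(g^{1/2} n^{1/2})$, which in particular means that any two BFS levels at distance $k := \Theta((n/g)^{1/2})$ bracket a subgraph of $\Theta(n)$ vertices, and we work inside such a band.

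\textbf{Voronoi frame.} Inside the band I select $t = O((gn)^{1/2})$ centers---roughly one per $k$ vertices in BFS order---and associate each vertex $v$ (implicitly, by on-the-fly BFS) with the center $\vor{v}$ closest to it, breaking ties lexicographically. The candidate separator is the set of boundary vertices $\bound{\vor{v}}$, i.e., those with a neighbor in a different cell. To bound its size I build the frame graph $\frameg$ whose vertices are the centers and whose edges correspond to pairs of cells sharing a face of the embedding. Euler's inequality $V - E + F \geq 2 - 2g$ for the surface gives $|E(\frameg)| \leq 3t + 6g = O((gn)^{1/2})$, so the separator contains $O(k(t+g)) = O(g^{1/2} n^{1/2})$ vertices.

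\textbf{Space efficiency.} Each Voronoi membership query costs an $O(k \log n)$-space truncated BFS from the queried vertex; the center set itself fits in $O(g^{1/2} n^{1/2} \log n)$ bits, and frame edges are enumerated by scanning faces of the embedding, requiring only $O(\log n)$ bits per face. Hence no intermediate structure exceeds the advertised budget.

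\textbf{Main obstacle.} In the planar case, Ashida et al.\ exploit the fact that Voronoi cells are topological disks, so $\frameg$ inherits a planar embedding from $G$ and a balanced separator is read off directly. On a genus-$g$ surface, a cell may fail to be simply connected and $\frameg$ can carry up to $O(g)$ non-contractible \fcycle{\cdot} loops. The bulk of the proof should go into producing a balanced partition despite these loops: either by cutting $G$ along $O(g)$ short non-contractible cycles to reduce to the planar argument of \cite{Imaiplanarsep}, or by directly generalizing Ashida et al.'s balanced-region analysis while absorbing the $O(g)$ topological defect into the combinatorial count. Either route requires the cutting cycles themselves to be found in $O(g^{1/2} n^{1/2} \log n)$ space, which is where the full strength of the genus-$g$ Euler inequality has to be used.
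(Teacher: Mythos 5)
There is a genuine gap, and it sits exactly where your proposal stops being concrete. First, the arithmetic of your size bound fails: with $k=\Theta((n/g)^{1/2})$ and $t=\Theta((gn)^{1/2})$ centers, the quantity $O(k(t+g))$ is $\Theta(kt)=\Theta(n)$, not $O(g^{1/2}n^{1/2})$. Taking \emph{all} Voronoi boundary vertices as the separator cannot work at this parameter setting; the point of the construction is precisely to avoid that. The paper instead uses a two-level argument: it builds a $2$-connected frame graph whose $O(n/k+g)$ faces each have boundary length $O(k^{1/2})$ and weight (enclosed vertices) less than $n/3$, and then applies the Gilbert--Hutchinson--Tarjan separator theorem to the \emph{dual} of that frame graph. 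Only the $O(g^{1/2}(n/k+g)^{1/2})$ faces selected by that second separator contribute their boundaries, giving $O(k^{1/2}\cdot g^{1/2}(n/k+g)^{1/2})=O(g^{1/2}n^{1/2})$ vertices. Your proposal never invokes a separator theorem on the frame graph, and it also does not establish the balance condition (a Voronoi cell of a single center can contain far more than $k$ vertices, which is why the paper needs pre-frame-loops, cores, and the floor/ceiling modification to drive every face weight below $n/3$).

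Second, the ``main obstacle'' paragraph defers the entire topological content of the proof. The paper's mechanism is concrete: for every pair of adjacent Voronoi regions it builds a spanning tree of diameter $O(k)$ from the two BFS trees, checks each fundamental cycle for non-contractibility via a planarity test (correctness resting on the $3$-path condition), and whenever such a cycle of length $O(k)$ is found it is added to the separator and deleted, reducing the genus by at least one; this happens at most $g$ times and contributes $O(gk)=O((gn)^{1/2})$ vertices total. Once no two adjacent regions contain a non-contractible cycle, the planar ridge-edge machinery of Ashida et al.\ applies inside each pair of regions, and the genus only enters through Euler's formula in counting the $O(n/k+g)$ branch vertices and faces of the frame graph. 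Without specifying how the short non-contractible cycles are found in small space and how the genus term propagates into the face count, your outline does not yet constitute a proof.
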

	To achieve this, given a graph $G$, we first find a maximal set of vertices in $G$ whose \textit{$k$-neighbourhoods} do not intersect each other. We call the vertices in this set \emph{Boss vertices}. We associate each vertex of the graph to one of the Boss vertices. We call the set of vertices associated with a common Boss vertex a \textit{Voronoi region}.
	If there exists a non-contractible cycle of length $O(k)$ in the graph that spans at most two of these Voronoi regions, we find that cycle and remove it from the graph. Removal of such a non-contractible cycle from the graph reduces its genus by at least one. Otherwise, if there exists no such cycle, we proceed by dividing the original graph further into a total of at most $O(n/k + g)$ regions so that each of them is bounded by a simple cycle of length $O(\sqrt{k})$ and the number of vertices present inside each region is at most $n/3$. We then use these regions to construct a \emph{Frame Graph}, which is the graph induced by the vertices on these cycles. We assign weights to each face of the frame graph such that the weight of a face is equal to the number of vertices of the original graph \emph{inside} the corresponding cycle. In the course of the construction of frame graph, we might encounter some properties of the original graph which allows us to output either a separator or a non-contractible cycle of size $O(k)$. Thus, in the end, we either have a non-contractible cycle of size $O(k)$, a small separator or the frame graph. If the result is a separator, then we output that separator. If the output is a non-contractible cycle $C$, we store it and restart the algorithm with the graph $G\setminus C$ as the input. The final output will be the union of $C$ with the separator of $G\setminus C$. If the result is the frame graph, we use the algorithm of Gilbert et al. \cite{Gilbert} to find a separator. For an appropriate value of $k$, our algorithm achieves the desired time and space bound.
	
	\subsection*{Comparison with the previous result} 
	
	The construction of separator by Chakraborty et al. $\cite{fsttcstewari}$ proceeds (roughly) as follows: Given a graph $G$ they first find a subgraph $H$ of the graph $G$, such that removal of vertices of $H$ from $G$ makes the resulting graph $G \setminus H$ planar. Subsequently, they obtain a separator of $G \setminus H$ using the algorithm of Imai et al. $\cite{Imai}$ and add it to the vertices of $H$ to get a separator of the graph $G$. The subgraph $H$ obtained in Chakraborty et al. \cite{fsttcstewari} might not be a connected subgraph of $G$. We find a smaller separator by first finding a $2$-connected weighted subgraph of $G$ such that a weight-separator of this weighted subgraph acts as a separator of the original graph. We find this $2$-connected subgraph by using \emph{Ridge edges} which was previously used by Gazit and Miller \cite{garrymiller} and Ashida et al. \cite{Imaiplanarsep} for the case of planar graphs. We show that by first efficiently removing non-contractible cycles from Voronoi regions, ridge edges can be used in graphs of higher genus as well. We call this $2$-connected subgraph a \emph{Frame graph}. We then show that weight-separator for this Frame graph can be found efficiently and hence get our result.
	
	\subsection*{Organization of the Paper}
	\label{org}
	The rest of the paper is organized as follows. In section $\ref{sec:prelims}$, we give some preliminary notations and definitions. To construct a separator for a given bounded genus graph we first divide the graph into Voronoi regions. We explain this procedure in section $\ref{sec:voronoi}$. We further divide Voronoi regions by using pre-frame-loops in section $\ref{sec:dividingVornoi}$. In section $\ref{sec:framegraph}$, we show how to process the pre-frame-loops and construct a \emph{frame graph} and then make \textit{floor} and \textit{ceiling} modifications in this frame graph. We thus get the required subgraph. Finally, in section $\ref{sec:final}$ we put it all together to construct the separator.
	
	\section{Preliminaries}
	\label{sec:prelims}    
	
	A graph is an ordered triple $G = (V(G), E(G), \partial)$ where $V(G)$ is the set of \emph{vertices}, $E(G)$ is the set of \emph{edges} and $\partial$ is a function that assigns to each edge a pair of vertices. Let $p$ be a path. We use $\first{p}$ to denote the first edge of $p$ and $\last{p}$ to denote the last edge of $p$. In an undirected graph $G$, it is helpful to regard each edge in $E$ as a pair of directed edges, or \emph{darts}. Each dart goes from one vertex, called its \emph{tail}, to another vertex, called its \emph{head}. For a dart $e$, we use $\tail{e}$ to denote the tail of the dart, and similarly, we use $\head{e}$ to denote the head of the dart. The two darts that results from a single undirected edge are said to be \emph{reverse} of each other. If two darts $e_1$ and $e_2$ are reverse of each other, we denote $e_2$ by $\rev{e_1}$ and $e_1$ by $\rev{e_2}$.
	
A $g$-genus surface is a surface with $g$ many holes in it. Genus of a graph $G$ is the smallest $g$ such that $G$ can be embedded on a surface of genus $g$. For simplicity of discussion, we only consider graphs that can be embedded on an \emph{orientable} surface. Let $G$ be a graph embedded on a surface $S$ of genus $g$. The \emph{faces} of the embedding of $G$ are the connected components of $S\setminus G$. If a face is homeomorphic to an open disk, it is called a $2$-cell. If every face is homeomorphic to an open disk, the embedding is called a $2$-cell embedding. A combinatorial embedding of $G$ is defined as $\pi = \{ \pi_v \mid v \in V(G)\}$ where for each vertex $v$, $\pi_v$ is a cyclic permutation of darts whose tail is $v$. For a dart $e$, we use $\lface{e}$ to denote the face which is on the left of $e$ and $\rface{e}$ to denote the face on the right of $e$. A \emph{triangulated graph} is a graph that is embedded on a surface such that every face is a $2$-cell and has three boundary edges.
	
	We define a dual graph $\tilde{G}$ of $G$ for an embedding in the following way: $\tilde{G}$ contains a vertex $\tilde{v}$ corresponding to every face of $G$ and two vertices of $\tilde{G}$ have an edge between them if their corresponding faces share an edge in $G$. We say that an edge $\tilde e$ of $\tilde G$ crosses an edge $e$ of $G$ if the faces at the endpoints of $\tilde{e}$ shares the edge $e$ of $G$.
	
	Let $G$ be a graph embedded on a surface of genus $g$. Let $U$ be a subset of vertices of $G$, $F$ be a subset of edges of $G$ and $R$ be a subset of faces of $G$. Then $G[U]$ denotes the subgraph of $G$, induced by the vertices in the set $U$. Similarly, $G[F]$ denotes the subgraph of $G$, containing all the edges of $F$ together with their endpoints. By $G[R]$ we denote the graph containing all the vertices and edges that are in the boundary of a face in $R$.
	
	Let $G$ be a graph of genus $g$. A set $R$ of its faces is a \emph{region} if $\tilde{G}[R]$ is connected. The set of edges of $G$ whose only one side has a face in $R$ is called the \emph{boundary} of $R$.
	
	If $G$ is a graph embedded on a surface and $c$ is a cycle in $G$, then we define the \emph{left graph} and \emph{right graph} of $c$ as follows: If $e$ is a dart of $c$ followed by $e' = \pi^k_{\head{e}}(\rev{e})$, then all edges $\pi_{\head{e}}(\rev{e}), \pi^2_{\head{e}}(\rev{e}), \ldots, \pi^{k-1}_{\head{e}}(\rev{e})$ are said to be on the \emph{left} side of $c$. An edge $e''$ which is not incident with $c$ and which is connected by a path in $G\setminus c$ to an end of an edge of the left side of $c$ is also said to be on the left side. Now the left graph of $c$ is defined as the edges on the left side of $c$ together with all their ends. The right graph $G$ is defined analogously. We will often use the term \emph{inside of $c$} to denote the left graph of $c$ and \emph{outside of $c$} to denote the right graph of $c$.
	
	\begin{definition}
		Let $c$ be a cycle of an embedded graph $G$ such that the one of the sides of $c$ is planar. We call $c$ a \emph{contractible} cycle of $G$.
	\end{definition}
	
	A cycle which is not a contractible cycle is called a non-contractible cycle.
	
	We say that a set $C$ of cycles satisfies the $3$-path-condition if the following property holds: If $u$ and $v$ are vertices of $G$ and $P_1$, $P_2$ and $P_3$ are internally vertex disjoint paths from $u$ to $v$. If two of the cycles $C_{i,j} = P_i \cup P_j (1 \leq i < j \leq 3 )$ are not in $C$ then the third one is also not in $C$. It is a well known fact that the set of non-contractible cycles satisfies the $3$-path-condition \cite{Thomassen1990}.
	
	We define $\dist{u}{v}$ to be the length of the shortest path between two vertices $u$ and $v$. We introduce a total order (denoted by $<_v$) in the vertex set $V$ of the graph based on the distance from $v$. For any vertices $u$ and $w$, we say that $u$ is nearer to $v$ than $w$ (written as $u <_v w$) if we have either 
	\begin{itemize}
		\item $\dist{u}{v} < \dist{w}{v}$ or
		\item $\dist{u}{v} = \dist{w}{v}$ and $u$ has a smaller index than $w$
	\end{itemize}
	
	For any set $W$ of vertices of $G$, $\nrst_{v}(W)$ denotes a vertex $u$ in $W$ such that $u <_{v} w$ of all $w \in W \setminus \{u\}$. For any sets $W$ and $W'$ of vertices, we write $W <_{v} W'$ if $\nrst_{v}(W) <_{v} \nrst_{v}(W')$.
	
	Let $G$ be a graph of genus $g$. A closed loop $c$ is a sequence of \emph{distinct} darts $e_1, e_2, \ldots, e_m$ of $G$ such that $\head{e_{i}} = \tail{e_{(i+1)\bmod m}}$.
	
	\begin{definition}
		Let $G$ be a weighted-graph with positive integral weights on each of its vertices that sums to $n$ and $\alpha \in (0, 1)$. An $\alpha$-separator of $G$ is a set $S$ of vertices of $G$ such that the removal of $S$ creates disconnected subgraphs each of which has at most $\alpha n$ weight, where the weight of a subgraph is the sum of the weights of the vertices in it.
	\end{definition}
	
	We note that Allender and Mahajan \cite{AllenderMahajan} showed that the problem of testing whether a graph is planar or not is in $\SL$. They also gave the $\SL$ algorithm to construct the planar embedding. Subsequently, Reingold \cite{Reingold} showed that $\SL = \L$ hence there exists a logspace algorithm to test if a graph is planar and also produce its embedding. We summarize this fact in the following lemma.
	
	\begin{lemma}
		\label{lem:planaritytesting}
		There exists a logspace algorithm which tests whether the input graph is planar and if so, it outputs an embedding of the input graph.
	\end{lemma}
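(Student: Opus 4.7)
The plan is to combine two known results rather than to build a new algorithm from scratch. First I would invoke the construction of Allender and Mahajan~\cite{AllenderMahajan}, which gives a symmetric logspace ($\SL$) algorithm that (i) decides whether an input graph $G$ is planar and, (ii) whenever $G$ is planar, outputs a combinatorial embedding of $G$. In a slightly expanded write-up I would briefly recall that their procedure relies on $\SL$-subroutines such as undirected $s$-$t$ reachability and bipartiteness testing of the conflict graph of a spanning tree, both of which were already known at the time to lie in $\SL$, and I would point out that the object produced by their algorithm is exactly a rotation system on the darts.

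Next I would appeal to Reingold's theorem~\cite{Reingold} that $\SL = \L$. This collapse promotes any symmetric logspace machine to a deterministic logspace one, with only a polynomial blow-up in running time and with the same input/output behavior. Composing this collapse with the algorithm of the previous step immediately yields a deterministic logspace procedure that, given $G$, decides planarity and, in the planar case, writes a valid rotation system $\pi = \{\pi_v : v \in V(G)\}$ on its output tape. This is precisely the notion of combinatorial embedding introduced in Section~\ref{sec:prelims}, so it establishes the lemma.

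The only real obstacle, and the step I would deliberately avoid re-proving, is the internals of \cite{AllenderMahajan}: an $\SL$-implementation of a Hopcroft--Tarjan style planarity test, together with the extraction of an explicit rotation system, is a substantial technical undertaking and is best treated as a black box. For the present lemma it is enough to chain the two cited theorems and to observe that the output format supplied by \cite{AllenderMahajan}, after applying Reingold's simulation, matches the definition of embedding used throughout the rest of the paper.
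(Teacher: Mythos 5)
Your proposal is correct and matches the paper exactly: the paper also establishes this lemma by citing Allender and Mahajan's $\SL$ algorithm for planarity testing and embedding construction, and then invoking Reingold's theorem that $\SL = \L$. No further detail is given in the paper, so treating the internals of \cite{AllenderMahajan} as a black box is exactly what the authors do as well.
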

	
	Gilbert, Hutchinson and Tarjan proved the existence of an $O(n^{1/2}g^{1/2})$ size separator for the graphs of genus $g$ \cite{Gilbert}. They also presented an $O(n + g)$ time algorithm to find the separator. Therefore we can conclude that their algorithm runs in $O((n + g)\log n)$ space as well. We can thus use the following lemma for our result.
	
	\begin{lemma}
		\label{lem:genussep}
		There exists a polynomial time algorithm that takes, as an input, an $n$-vertex graph of genus $g$ along with its combinatorial embedding and finds its separator of size $O(n^{1/2}g^{1/2})$ using $O((n + g)\log n)$ space.
	\end{lemma}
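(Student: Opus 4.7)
The plan is essentially to appeal to the Gilbert, Hutchinson and Tarjan result and then convert the stated time bound into a space bound by a standard argument. First, I would invoke their construction: given an $n$-vertex embedded graph of genus $g$, they produce a separator of size $O(n^{1/2}g^{1/2})$ in $O(n+g)$ time on a standard word-RAM model with $O(\log n)$-bit words. Thus the existence of a polynomial-time algorithm delivering a separator of the required size is immediate; only the space bound needs justification.

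For the space bound, I would use the general observation that a deterministic algorithm running in time $t(n)$ on a word-RAM whose words hold $O(\log n)$ bits can access at most $t(n)$ distinct memory cells during its execution, so the total bit-space it actually touches is bounded by $O(t(n)\log n)$. Applied to the Gilbert--Hutchinson--Tarjan algorithm with $t(n)=O(n+g)$, this gives a working-space bound of $O((n+g)\log n)$ bits. Combining this with the correctness and separator-size guarantees from \cite{Gilbert} yields the lemma.

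The main (very minor) care-point is making sure that the word-RAM model in which the $O(n+g)$-time bound is stated is compatible with the space model in which we wish to state the lemma; since both measure memory in $O(\log n)$-bit words, and since the input embedding itself already occupies $\Theta((n+g)\log n)$ bits to write down, the conversion is harmless. I do not anticipate any substantive obstacle beyond this bookkeeping.
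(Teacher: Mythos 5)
Your proposal matches the paper's own justification exactly: the paper also simply cites the $O(n+g)$-time algorithm of Gilbert, Hutchinson and Tarjan and converts the time bound into an $O((n+g)\log n)$ space bound by the standard observation that a $t$-time algorithm touches at most $t$ words of $O(\log n)$ bits. No further argument is given in the paper, so your write-up is, if anything, slightly more explicit about the model-compatibility bookkeeping.
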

	
We will need the notion of fundamental cycles in our construction of separator therefore we define it formally. 
	
	\begin{definition}
		Let $G$ be a graph and $T$ be a spanning tree of $G$. Let $e$ be an edge that does not belong to $T$. A simple cycle $c$, which consists of $e$ and the path in $T$ joining the endpoints of $e$ is called a fundamental cycle.
	\end{definition}
	
	\section{Voronoi Region}
	$\label{sec:voronoi}$
	
	As we discussed in section \ref{sec:intro}, we start by dividing the input graph into something that we call \emph{Voronoi regions}. In this section, we define the notion of Voronoi Regions and explain how they could be constructed in space-efficient manner. This notion has been previously used in designing a separator for planar graphs by Imai et al., Ashida et al., Gazit and Miller, and Koutis and Miller \cite{Imai, Imaiplanarsep, garrymiller, koutismiller}.
	
	We first define $k$-neighbourhood of a vertex. This is a key tool that will help us define and construct a Voronoi region.
	\begin{definition} Let $G$ be a graph and $v$ be a vertex of $G$. Let $L(v,i)$ be the set of vertices at distance $i$ from $v$. The $k$-neighbourhood $N_k(v)$ of a vertex $v$ is defined as:
		\[N_k(v) = \bigcup_{1 \leq i \leq d} L(v,i)\]    
		
		where $d$ is the smallest integer such that $\lvert\bigcup_{1 \leq i \leq d}L(v,i)\rvert \geq k$.
		
	\end{definition}
	
	Note that we have defined $k$-neighbourhood in a slightly different way when compared to the definition of Imai et al. \cite{Imai} and Chakraborty et al. \cite{Chakraborty}. In their work, $N_k(v)$ is chosen so that it contains at most $k$ vertices, while here it contains at least $k$ vertices. We believe this definition makes our proof simpler to follow.
	
	\begin{definition}Let $G$ be a graph. A set $I$ of vertices of $G$ is called a $k$-maximal independent set if the following holds:
		\begin{itemize}
			\item For every $b_1,b_2 \in I$, $N_k(b_1) \cap N_k(b_2) = \emptyset$.
			\item For every $v$ that is not in $I$, we have a vertex $b \in I$ such that $N_k(v) \cap N_k(b) \neq \emptyset$.
		\end{itemize}
	\end{definition}
	
	\begin{lemma}
		\label{lem:maxind}
		There exists an $O((k + n/k)\log n)$-space and polynomial time algorithm that takes a graph $G$ as input and outputs a $k$-maximal independent set $I$.
	\end{lemma}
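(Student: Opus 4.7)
My plan is a straightforward greedy construction. I would iterate through $V(G)$ in order of index, maintaining a set $I$ initialized to $\emptyset$: at each vertex $v$, I test whether $N_k(v)\cap N_k(b)=\emptyset$ for every $b$ already in $I$, and add $v$ to $I$ exactly when every such test passes. Correctness is immediate. The pairwise-disjointness condition is enforced at the moment of each insertion, and since $I$ only grows monotonically, it persists. The maximality condition follows because any $v$ that ends up outside $I$ must have been rejected when processed, so some $b\in I$ at that time (and hence forever) witnesses $N_k(v)\cap N_k(b)\neq\emptyset$.

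The main content of the lemma is the space bound. The key combinatorial observation is $|I|\leq n/k$: the sets $\{N_k(b) : b\in I\}$ are pairwise disjoint and each has size at least $k$, so their union contains at most $n$ vertices. Hence storing $I$ explicitly costs $O((n/k)\log n)$ bits. For the intersection tests I would represent $N_k(v)$ \emph{implicitly} by the pair $(v, d_v)$, where $d_v$ is the smallest integer with $\lvert\{u : 1\leq \dist{v}{u}\leq d_v\}\rvert\geq k$; both entries fit in $O(\log n)$ bits. Membership ``$u\in N_k(v)$?'' then reduces to comparing a shortest-distance value with $d_v$, and the test $N_k(v)\cap N_k(b)\neq\emptyset$ becomes a single scan over $u\in V(G)$ checking the two distance predicates. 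Computing $d_v$ itself is done by sweeping $d=1,2,\ldots$, scanning $V(G)$ at each step to count vertices at distance exactly $d$ from $v$, and stopping when the running total first reaches $k$.

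The most delicate point I anticipate is bounding the space of the distance subroutine, since the definition of $N_k(v)$ closes under the entire last level $L(v,d_v)$, so $|N_k(v)|$ can exceed $k$ by an arbitrary amount and a naive BFS storing each frontier in full could blow the budget. If undirected shortest-distance queries are available in logspace (for example via Reingold's machinery for undirected connectivity, combined with a binary search over path lengths), the whole intersection subroutine uses only $O(\log n)$ workspace beyond the storage of $I$. Otherwise, I would fall back on a truncated BFS from $v$ that halts as soon as $k$ distinct vertices have been visited; since every visited vertex is either in the queue or already processed, such a BFS buffers at most $k$ vertices and hence uses $O(k\log n)$ bits, which is precisely the $k\log n$ slack appearing in the claimed bound. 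Combined with the $O((n/k)\log n)$ bits for $I$, the total is $O((k+n/k)\log n)$. Polynomial running time is evident: the outer loop has $n$ iterations, each performing $O(n/k)$ intersection tests, and each intersection test is a polynomial-time scan with polynomial-time distance or BFS subcalls.
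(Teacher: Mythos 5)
Your greedy construction with the $|I|\le n/k$ counting argument and the truncated BFS (halting once $k$ vertices are visited) is exactly the standard argument the paper is invoking when it defers this lemma to Imai et al.\ and Chakraborty et al., and your space accounting is correct. The only detail left implicit is that membership of $u$ in the (possibly large) last level $L(v,d_v)$ is decided by checking adjacency of $u$ to the stored levels $1,\dots,d_v-1$ (which together have fewer than $k$ vertices), rather than by enumerating $L(v,d_v)$ itself; with that observation your argument goes through as stated.
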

	The proof of the above lemma is quite straight forward. We refer readers to \cite{Imai,fsttcstewari}.\\
	For a graph $G$, we will use the notation $\ind{G}$ to denote the set returned by the algorithm of lemma $\ref{lem:maxind}$.
	
	\begin{definition} Let $G$ be a graph. For any vertex $v$, the \emph{boss-vertex} of $v$ is a vertex $b$ of $\ind{G}$ such that $N_k(b) <_{v} N_k(b')$, for all $b' \in \ind{G}\setminus\{b\}$. We define $\vor{b}$ to be the set of all vertices whose boss-vertex is $b$. We use $\boss{v}$ to denote the boss-vertex of $v$.
	\end{definition}
	
	Note that the graph induced by the vertices in the set $\vor{b}$ form a connected component in $G$. Therefore, the faces corresponding to these vertices form a region in $\tilde{G}$. We will henceforth call $\vor{b}$ the \emph{Voronoi region} of $b$.
	
	The input graph might contain small non-contractible cycles. We require that the union of any two Voronoi regions do not have a non-contractible cycle, similarly as Chakraborty et al. \cite{fsttcstewari}. Thus, we remove such non-contractible cycles from the graph using the following lemma in our main algorithm.
	
	\begin{lemma}[\cite{fsttcstewari}]
		\label{lem:noncontr}
		There is an $O((k + n/k)\log n)$-space and polynomial time algorithm that takes a graph $G$, and two boss-vertices ${b}_1$ and ${b}_2$ as input and checks for a non-contractible cycle of size $O(k)$ in $\vor{{b}_1} \cup \vor{{b}_2}$. The algorithm outputs one such cycle if it exists.
	\end{lemma}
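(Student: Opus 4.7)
The plan is to enumerate fundamental cycles with respect to a BFS tree of the induced subgraph $H := G[\vor{b_1}\cup\vor{b_2}]$ rooted at $b_1$, and test each one for non-contractibility using the planarity algorithm of Lemma \ref{lem:planaritytesting}. First I would show that every vertex $v \in \vor{b}$ is within graph distance $2k$ of $b$: $k$-maximality of $\ind{G}$ produces some $b^\star \in \ind{G}$ with $N_k(v)\cap N_k(b^\star)\neq\emptyset$, so any common vertex witnesses $\dist{v}{N_k(b^\star)} \le k$; since $b = \boss{v}$ we get $\dist{v}{N_k(b)}\le k$, and the depth of $N_k(b)$ is itself at most $k$, hence $\dist{v}{b}\le 2k$. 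Consequently every connected component of $H$ has diameter $O(k)$, any BFS tree $T$ of $H$ rooted at $b_1$ has depth $O(k)$, and every fundamental cycle of $T$ has length $O(k)$.

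To stay within the space budget I would only store $T$ implicitly: BFS distances in $H$ are recomputed on demand by logspace reachability (restricting to $\vor{b_1}\cup\vor{b_2}$ via the boss labels supplied by Lemma \ref{lem:maxind}), and the parent of $v\neq b_1$ is declared to be the neighbor of smallest index at distance $\dist{b_1}{v}-1$. For each non-tree edge $e=(u,v)$ I would materialize the fundamental cycle $C_e$ by walking $u$ and $v$ up to their least common ancestor in $T$, storing the resulting $O(k)$-length cycle in $O(k\log n)$ space. Then I would cut $H$ along $C_e$ using the combinatorial embedding inherited from $G$, obtain the left and right graphs of $C_e$ as in the preliminaries, and feed each side to Lemma \ref{lem:planaritytesting}; the first $C_e$ for which neither side is planar is returned.

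Correctness hinges on two points. The size bound is automatic because every fundamental cycle has length $O(k)$. For completeness, if $H$ contains any non-contractible cycle then at least one fundamental cycle of $T$ is non-contractible: the fundamental cycles form a basis of the cycle space of $H$, and the $3$-path-condition recalled in the preliminaries implies that the family of contractible cycles is closed under symmetric difference, so if every fundamental cycle were contractible then every cycle of $H$ would be too. The space analysis is then straightforward: $O((n/k)\log n)$ for the boss labels, $O(k\log n)$ for the current fundamental cycle and auxiliary paths, and $O(\log n)$ for the BFS and planarity subroutines, summing to $O((k+n/k)\log n)$.

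The main obstacle I anticipate is the last correctness step: justifying that the $3$-path-condition really yields closure of the contractible cycles under symmetric difference, and, relatedly, that cutting $H$ along a cycle using the embedding inherited from $G$ produces the correct notion of ``planar side'' as the paper defines contractibility. Both are standard topological bookkeeping but deserve care because $H$ is a strict subgraph whose face structure need not coincide with that of $G$; I would handle this by reading off the two sides of each $C_e$ from the combinatorial embedding of $G$ itself rather than from any re-embedding of $H$, so that the planarity test applied to each side faithfully distinguishes contractible from non-contractible cycles in the sense used throughout the paper.
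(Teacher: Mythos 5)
Your proposal matches the paper's proof in all essentials: both enumerate the $O(k)$-length fundamental cycles of a low-depth spanning tree of $\vor{b_1}\cup\vor{b_2}$, invoke the $3$-path condition to guarantee that some fundamental cycle is non-contractible whenever any non-contractible cycle exists, and test contractibility via the logspace planarity test of Lemma~\ref{lem:planaritytesting}, within the same space budget. The only substantive difference is that the paper glues the two regions' BFS trees by an arbitrary edge and explicitly handles the case where $\vor{b_1}\cup\vor{b_2}$ is disconnected by treating each region separately, whereas your single BFS tree rooted at $b_1$ would miss $\vor{b_2}$ in that case — a minor omission with an immediate fix.
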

	
	\begin{proof}    
		First, consider the case when $\vor{{b}_1} \cup \vor{{b}_2}$ forms a connected subgraph of ${G}$. We combine the BFS-trees of $\vor{{b}_1}$ and
		$\vor{{b}_2}$ using an arbitrary edge to get a spanning tree of $\vor{{b}_1} \cup \vor{{b}_2}$ with diameter $O(k)$. We denote this spanning tree as $T$. Note that $T$ can be computed in polynomial-time and $O((n/k + k)\log n)$ space. We know that the set of all non-contractible cycles of any graph $G$ satisfy $3$-path condition \cite{Thomassen1990}. Since the diameter of $T$ is $O(k)$, any fundamental cycle of this tree of size $O(k)$. The 3-path condition implies that if a non-contractible cycle exists, then one of the fundamental cycles is non-contractible. We can check whether a cycle is contractible by checking planarity. Thus, we can do it in $O(\log n)$ space by Lemma \ref{lem:planaritytesting}. Thus, the lemma follows. The other case where $\vor{{b_1}}$ and $\vor{{b_1}}$ are not connected, we can apply the same procedure on spanning trees of $\vor{{b_1}}$ and $\vor{{b_2}}$ separately.\end{proof}
	
	As mentioned in the introduction, we will use the Voronoi regions to construct our Frame graph. For this construction, we first divide the Voronoi Regions.
	
	\subsection{Dividing Voronoi Regions using Pre-Frame-Loops}
	$\label{sec:dividingVornoi}$ 
	In this subsection, we find a set of loops in the input graph $G$. Each of these loops contains vertices of at most two Voronoi regions \emph{inside} them. We then further process these loops so that the number of vertices inside them is small.
	
	Let $G$ be a triangulated graph of genus $g$. Note that, any connected component of $G$ forms a region in $\tilde{G}$. Also, note that since the size of each face of $G$ is three, all the vertices of the graph $\tilde{G}$ will have degree three. Thus, a region of faces in $\tilde{G}$ will have a boundary that is a set of vertex-disjoint simple cycles.

 We require two kinds of edges in the dual graph to construct the desired loops. One is the set of the boundary edges of all the Voronoi regions, and the other is the set of \emph{Ridge edges}. Ridge edges have been used previously by Gazit and Miller \cite{garrymiller} and Ashida et al. \cite{Imaiplanarsep}. We define it as follows.
	
	\begin{definition}Let $G$ be a graph and ${b}$ be a boss-vertex. Let ${T}$ be the BFS tree of $\vor{{b}}$. For an edge ${e}$ of $G[\vor{{b}}]$ that do not belong to ${T}$, let ${c}_{e}$ be the fundamental cycle induced by ${e}$ on ${T}$. If each of the two sides of the cycle ${c}_{e}$ contains at least one boundary cycle of $\vor{{b}}$, then the edge $\tilde e$ of $\tilde G$ crossing ${e}$ is called a ridge edge.
	\end{definition}
\begin{figure}
	
	\centering
	\includegraphics[width = 0.3\textwidth]{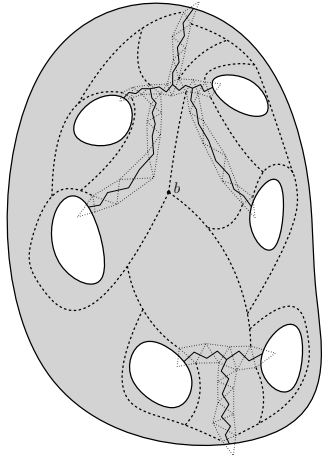}
	\caption{A diagram showing $\vor{b$} for a boss vertex $b$. The part of the surface where the vertices of $\vor{b}$ are present is shown in grey colour. The boundary of the Voronoi region is shown using thick solid lines. The ridge edges are shown using normal solid lines. Dashed lines show some of the edges of the spanning tree of $\vor{b}$. Dotted lines show faces in $G$ that corresponds to a vertex $v$ which is an endpoint of a ridge edge in $\tilde{G}$.} 
	\label{fig:vorb}
\end{figure}

Figure \ref{fig:vorb} shows ridge edges in the Voronoi region of a boss vertex $b$.
	
	\begin{definition}Let $G$ be a graph of genus $g$. Let $\tilde B$ be the set of boundary edges of $\vor{{b}}$ for all boss vertices $b$. Similarly, let $\tilde{R}$ be the set of ridge-edges. A \emph{branch vertex} is a degree three vertex in the graph $\tilde{G}[\tilde{B} \cup \tilde{R}]$. For each branch vertex $\tilde{v}$, the boundary of face consisting of three vertices incident to $\tilde{v}$ is a \emph{branch-triangle}. Two branch vertices are called \emph{adjacent} to each other if they are connected by a path consists of darts corresponding to the edges in the set $\tilde B\cup \tilde R$ such that no other branch vertex exists on this path. The path connecting adjacent branch vertices is called a \emph{connector}. We denote the set of connectors in $\tilde G$ by $\con{G}$.
	\end{definition}
	
	Let $\tilde p$ be a connector. Note that the end points of $\tilde{p}$ are an adjacent pair of branch vertices. Also note that, $\boss{\lface{\first{p}}}$ is same as $\boss{\lface{\last{p}}}$ and $\boss{\rface{\first{p}}}$ is same as $\boss{\rface{\last{p}}}$. We define a pre-frame-loop with respect to $\tilde p$ as follow.
	
	\begin{definition}Let $G$ be a graph embedded on a surface of genus $g$. For any connector $\tilde p$, a pre-frame-loop (denoted by $\pfloop{\tilde p}$) is a closed loop that consists of 
		\begin{enumerate}
			\item A path from $\rface{\first{p}}$ to $\boss{\rface{\first{p}}}$ in the BFS-tree of $\vor{\boss{\rface{\first{p}}}}$
			\item A path from $\boss{\rface{\first{p}}}$ to $\rface{\last{p}}$ in the BFS-tree of $\vor{\boss{\rface{\first{p}}}}$ 
			\item A branch-triangle dart ${e}_{last}$ from $\rface{\last{p}}$ to $\lface{\last{p}}$
			\item A path from $\lface{\last{p}}$ to $\boss{\lface{\last{p}}}$ in the BFS-tree of $\vor{\boss{\lface{\last{p}}}}$.
			\item A path from $\boss{\lface{\last{p}}}$ to $\lface{\first{p}}$ in the BFS-tree of $\vor{\boss{\lface{\last{p}}}}$. 
			\item A branch-triangle dart ${e}_{fst}$ from $\lface{\first{p}}$
			to $\rface{\first{p}}$.
		\end{enumerate}
		We denote the set of all the pre-frame-loop in $G$ as $\pfloop{G}$    
	\end{definition}
	
	\begin{lemma}
		\label{lem:pfcycle}
		There exists an $O((k + n/k)\log n)$-space and polynomial time algorithm that takes $G$ as an input and outputs the list $\pfcycle{ G}$ of all pre-frame-loops in ${G}$.
	\end{lemma}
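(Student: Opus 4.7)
The plan is to enumerate connectors $\tilde p$ of $\tilde G$ one at a time and, for each, write $\pfloop{\tilde p}$ directly to the output tape, so that the working space never has to hold more than a single pre-frame-loop even though the overall list may be long.

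I would begin by computing $\ind{G}$ via Lemma \ref{lem:maxind} in $O((k+n/k)\log n)$ space. With $\ind{G}$ available, $\boss{v}$ for any vertex $v$ is computable by running BFS (one boss at a time) from the $O(n/k)$ boss vertices and maintaining only the current $<_v$-minimum; this provides a space-efficient oracle for the Voronoi regions $\vor{b}$. Membership of a dual edge in $\tilde B$ reduces to comparing the boss vertices of the two faces it separates.

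Next I would enumerate the ridge edges $\tilde R$. For each boss vertex $b$ I fix the canonical BFS tree $T_b$ of $\vor{b}$ rooted at $b$; tree-paths are recovered on demand by following parent pointers, so only the current root-to-vertex path is in memory. For every non-tree edge $e$ of $G[\vor{b}]$, the fundamental cycle $c_e$ is defined by $e$ together with the $T_b$-path between its endpoints, and the dart crossing $e$ in $\tilde G$ is a ridge edge iff each of the two sides of $c_e$ meets a boundary cycle of $\vor{b}$. To test this I would iterate over the boundary cycles of $\vor{b}$---each traced by walking around $\tilde B$-darts incident to $\vor{b}$---and classify each boundary cycle against $c_e$ using a logspace undirected reachability query in $G[\vor{b}] \setminus c_e$ against a reference vertex on a known side (available via Reingold's theorem, as used to obtain Lemma \ref{lem:planaritytesting}). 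Once $\tilde B \cup \tilde R$ is accessible as an oracle, branch vertices are identified as the degree-three vertices of $\tilde G[\tilde B \cup \tilde R]$ (and recognized locally, since every vertex of $\tilde G$ has degree three in $\tilde G$ when $G$ is triangulated), and connectors are produced by starting at each branch vertex, choosing an outgoing $\tilde B \cup \tilde R$-dart, and walking through degree-two vertices of $\tilde G[\tilde B \cup \tilde R]$ until the next branch vertex.

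For a connector $\tilde p$ in hand, all six segments of $\pfloop{\tilde p}$ are now determined: two tree-paths in $T_{\boss{\rface{\first{p}}}}$, two in $T_{\boss{\lface{\last{p}}}}$, and the two branch-triangle darts $e_{\mathrm{fst}}$ and $e_{\mathrm{last}}$. These can be written out dart by dart without ever materializing the whole loop.

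The main obstacle I anticipate is keeping the ridge-edge test within $O((k+n/k)\log n)$ space while streaming over all non-tree edges of every Voronoi region. The key enabling facts are that $\vor{b}$ is a single connected dual region so its boundary cycles can be traced with only a single path's worth of state; that a fundamental cycle $c_e$ is implicitly describable by the two endpoints of $e$ in $T_b$; and that the left/right classification of a boundary cycle with respect to $c_e$ reduces to undirected reachability, which is in logspace. Combining these, every ingredient (the boss-vertex oracle, the implicit BFS trees, the on-the-fly boundary and ridge tests, and the walk along connectors) fits into the stated budget, and the algorithm runs in polynomial time because each of the polynomially many connectors is handled in polynomial time.
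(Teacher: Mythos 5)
The paper states this lemma without proof (deferring to the framework it borrows from Ashida et al.\ and Gazit--Miller), and your construction is exactly the intended instantiation of that framework: build the boss-vertex oracle from $\ind{G}$, recognize $\tilde B$ by comparing bosses across a dual edge, recognize $\tilde R$ by the fundamental-cycle/boundary-cycle test, walk connectors between degree-three vertices of $\tilde G[\tilde B\cup\tilde R]$, and emit the six segments of each loop dart by dart. The only point that deserves an explicit word in a full write-up is the space accounting for the distance comparisons underlying $\boss{v}$ and $\nrst_v(\cdot)$: general-graph BFS is not logspace, so you must argue that all relevant distances are $O(k)$ (via the maximality of $\ind{G}$) and that a truncated BFS storing only the first $<k$ levels suffices, which is the standard argument from Imai et al.\ that the paper itself invokes for Lemma~\ref{lem:maxind}. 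With that caveat, your proof is correct and matches the approach the paper presupposes.
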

	
	In the next section, we will use these pre-frame-loops to create faces of our subgraph. Following Ashida et al. \cite{Imaiplanarsep}, we call this new graph Frame Graph.
	
	\section{Frame Graph}
	\label{sec:framegraph}
	We wish to use pre-frame-loops to create faces of the frame graph. In order to do this, we first preprocess these loops so that the \emph{inside} of each loop is small, i.e., has at most $n/3$ vertices in it. This would ensure that the weight on any face of the frame graph is bounded. In the second step, we remove those edges of the loop for which both of its darts are traversed and thus break the loop into simple cycles. These cycles will act as boundaries of the faces in the frame graph.
		
	Consider a connector $\tilde {p}$ of the input graph $G$ and the pre-frame-loop $c$ induced by $\tilde{p}$. Note that $c$ is in the union of two Voronoi regions. Since we have eliminated all non-contractible cycle from the union of any two Voronoi regions, $c$ cannot contain a non-contractible cycle. Thus, $c$ divides the surface. A pre-frame loop is of type $A$ if it consists of two boss vertices and it is of type $B$ if it consists of only one boss vertex (see Figure \ref{fig:type}).    
	
	\begin{figure}
		\begin{center}
			\begin{tikzpicture}[]
			\usetikzlibrary{decorations.markings}
			\begin{scope}[very thick,decoration={
				markings,
				mark=at position 0.5 with {\arrow{>}}}
			]
			
			\draw[postaction={decorate}][draw, thick] (0,0) -- (4,0);
			
			\draw[postaction={decorate}][draw, thick] (7,0) -- (11,0);

			\end{scope}

			\draw[dashed,thick] (.5,-.5) to [] (.5,.5);
			\draw[dashed,thick] (3.5,-.5) to [] (3.5,.5);
			
			\draw[dashed,thick] (.5,.5) to [] (2,2);
			\draw[dashed,thick] (3.5,.5) to [] (2,2);
			
			\draw[dashed,thick] (.5,-.5) to [] (2,-2);
			\draw[dashed,thick] (3.5,-.5) to [] (2,-2);

			\draw[fill] (2,2) circle [radius=0.1];
			\draw[fill] (2,-2) circle [radius=0.1];
			
			\draw[] (.5,-.5) circle [radius=0.05];
			\draw[] (.5,.5) circle [radius=0.05];
			\draw[] (3.5,-.5) circle [radius=0.05];
			\draw[] (3.5,.5) circle [radius=0.05];
			\draw[] (0,0) circle [radius=0.05];
			\draw[] (4,0) circle [radius=0.05];
			\node at (2,-.4) {$\tilde{p}$};

			\draw[dashed,thick] (7.5,-.5) to [] (7.5,.5);
			\draw[dashed,thick] (10.5,-.5) to [] (10.5,.5);
			
			\draw[dashed,thick] (7.5,.5) to [] (9,2);
			\draw[dashed,thick] (10.5,.5) to [] (9,2);

			\draw[dashed,thick] (7.5,-.5) .. controls (6,0) .. (9,2);
			\draw[dashed,thick] (10.5,-.5) .. controls (12,0) .. (9,2);

			\draw[fill] (9,2) circle [radius=0.1];

			\draw[] (7.5,-.5) circle [radius=0.05];
			\draw[] (7.5,.5) circle [radius=0.05];
			\draw[] (10.5,-.5) circle [radius=0.05];
			\draw[] (10.5,.5) circle [radius=0.05];
			\draw[] (7,0) circle [radius=0.05];
			\draw[] (11,0) circle [radius=0.05];
			\node at (9,-.4) {$\tilde{p}$};
			\node at (2,2.3) {$b_1$};
			\node at (2,-2.3) {$b_2$};
			\node at (9,2.3) {$b_1$};
			\node at (2,1) {$P_0$};
			\node at (1,2) {$P_1$};
			\node at (7.15,.25) {$P_{1,1}$};
			\node at (11,.25) {$P_{1,2}$};
			\node at (9,1) {$P_{0}$};
			\node at (9,3) {$\textrm{Type B}$};
			\node at (2,3) {$\textrm{Type A}$};
			\end{tikzpicture} 
			\caption{On the left, a pre-frame-loop of type $A$. The two boss vertices corresponding to the loop are $b_1$ and $b_2$. On the right, a pre-frame-loop of type $B$. The only boss vertex corresponding to this loop is $b_1$}
			\label{fig:type}
		\end{center}
	\end{figure}
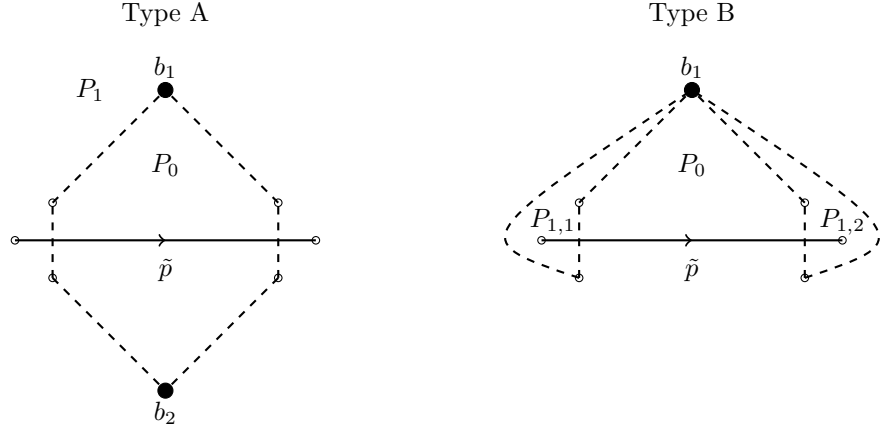
	
	Let the part of a connector $\tilde{p}$ excluding its first and last vertex be called the \emph{body} of $\tilde{p}$. Let $P_0$ denote the surface of $G \setminus c$ that has the body of the connector $\tilde{p}$. Call this \emph{inside} of ${c}$. Let ${n}_0$ be the number of vertices in $P_0$ not including the vertices of ${c}$. We say that the inside of $c$ is \emph{large} if $n_0$ is greater than $2n/3$. Note that the inside of ${c}$ is included in the union of atmost two Voronoi regions. Let ${b}_1$ and ${b}_2$ be the boss vertices of these two regions. We use the BFS-Trees of $\vor{{b}_1}$ and $\vor{{b}_2}$ to find a spanning tree of $\vor{{b}_1} \cup \vor{{b}_2}$. Since $\vor{{b}_1} \cup \vor{{b}_2}$ does not have a non-contractible cycle, it has a planar embedding. Hence we can use the spanning tree of $\vor{{b}_1} \cup \vor{{b}_2}$, along with the result algorithm of Lipton and Tarjan to find the separator of the graph $G[\vor{{b}_1} \cup \vor{{b}_2}]$. We can combine this separator with the boundary of the pre-frame-loop to get a separator of the whole graph. We summarize this fact in the following lemma.
	
	\begin{lemma}
		\label{lem:largein}
		Let $G$ be a graph of genus $g$ and $c$ be a pre-frame loop of $G$ whose inside is large. There exists a polynomial time algorithm that takes as an input $G$ and $c$ and outputs a separator of $G$ of size $O(gn^{1/2})$ in $O(gn)^{1/2}\log n)$ space.
	\end{lemma}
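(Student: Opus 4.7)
My plan is to reduce the task to a planar separator problem on a subgraph that we have already certified to be planar, and then to combine the planar separator with the boundary loop $c$ itself. The key observation is that the \emph{inside} of $c$ is contained in $\vor{b_1}\cup\vor{b_2}$, and by our preprocessing (Lemma~\ref{lem:noncontr}) the union of any two Voronoi regions is free of non-contractible cycles, so $G[\vor{b_1}\cup\vor{b_2}]$ is planar and $c$ itself is contractible in $G$.

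Concretely, I would first build a spanning tree of $H := G[\vor{b_1}\cup\vor{b_2}]$ by combining the BFS trees of $\vor{b_1}$ and $\vor{b_2}$, joined by any edge between them (and handled separately per component if disconnected), mirroring the construction in the proof of Lemma~\ref{lem:noncontr}; this runs in polynomial time and $O((k+n/k)\log n)$ space. By Lemma~\ref{lem:planaritytesting} I then obtain a planar embedding of $H$ in logspace. I would feed $H$ together with its embedding to the space-efficient planar separator algorithm of Imai et al.~\cite{Imai} (or Ashida et al.~\cite{Imaiplanarsep}) to produce a set $S$ of $O(n^{1/2})$ vertices that breaks $H$ into components of size at most $(2/3)|V(H)|\le 2n/3$, using $O(n^{1/2}\log n)$ working memory. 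The algorithm's final output is $S\cup V(c)$.

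For correctness, contractibility of $c$ means $G\setminus V(c)$ splits into the inside $P_0$ (with $n_0>2n/3$ vertices by hypothesis) and the outside $P_1$ (with fewer than $n/3$ vertices). The outside already consists of pieces of total size $<n/3$ and needs no further cutting, while the inside is contained in $H\setminus V(c)$; removing $S$ fragments it into components of size at most $2n/3$. Hence every component of $G\setminus(S\cup V(c))$ has at most $2n/3$ vertices, giving a valid $(2/3)$-separator of $G$. The space bound comes from running the two subroutines sequentially and reusing the working memory.

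The main obstacle I expect is controlling $|V(c)|$ so that $|S|+|V(c)|=O(gn^{1/2})$ as claimed. Since $c$ is a concatenation of four paths lying in the BFS trees of $\vor{b_1}$ and $\vor{b_2}$ together with two branch-triangle darts, $|V(c)|$ is governed by the radii of the two Voronoi regions involved, which in turn are tied to the parameter $k$ chosen in the main algorithm. A careful accounting of the BFS-tree depth, combined with the observation that Lemma~\ref{lem:largein} is invoked only in the ``large inside'' case---and indeed terminates the separator computation as soon as it is successful---should absorb $|V(c)|$ into the stated $O(gn^{1/2})$ size and $O((gn)^{1/2}\log n)$ space bounds.
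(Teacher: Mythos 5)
Your proposal matches the paper's argument: the paper likewise observes that the large inside of $c$ lies in $\vor{b_1}\cup\vor{b_2}$, which is planar because all non-contractible cycles in unions of two Voronoi regions were already eliminated, builds a spanning tree from the two BFS trees, extracts a planar separator of $G[\vor{b_1}\cup\vor{b_2}]$, and unions it with the vertices of $c$. The only (harmless) deviation is the planar-separator subroutine: the paper applies Lipton--Tarjan's fundamental-cycle separator to the diameter-$O(k)$ spanning tree, giving an $O(k)$-size separator in $O((n/k+k)\log n)$ space, whereas you invoke Imai et al., giving $O(n^{1/2})$ size in $O(n^{1/2}\log n)$ space; both fit the lemma's stated bounds, and the $|V(c)|=O(k)$ accounting you flag as a concern is exactly the diameter bound already asserted in the proof of Lemma~\ref{lem:noncontr}.
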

	
	Thus, if any of the pre-frame-loop acts as a separator or has a large inside, we can get a separator of the graph $G$. Otherwise, we construct a set $C$ in the following way: We first add all the pre-frame-loop of type $A$ into $C$.
	Note that if a pre-frame-loop $c$ is of type $B$ pre-frame-loop, it divides the surface into three parts. Call the two parts of the surface which does not contain the body of the connector as $P_{1, 1}$ and $P_{1, 2}$ respectively. Let the number of vertices in $P_0$, $P_{1, 1}$ and $P_{1, 2}$ be ${n}_0$, ${n}_{1, 1}$ and ${n}_{1, 2}$ respectively.
	We see that either ${n}_{1, 1} > 2{n}/3$ or ${n}_{1,2} > 2{n}/3$, for otherwise, our pre-frame-loop acts as a separator. Let us assume, without loss of generality, that ${n}_{1, 2} > 2{n}/3$. We merge  $P_0$ and $P_{1, 1}$ into a single surface, and add the loop ${c}_0$ bounding this surface to the set $C$. The $\emph{inside}$ of $c_0$ is the side containing the surfaces $P_0$ and $P_{1,1}$.
	
	Now, consider a loop ${c}$ of $C$, that is not contained in the inside of any other loop $c$ of $C$. Let ${E}_{{c}}$ be the set of darts whose reverse does not appear in ${c}$. Let $E$ be the union of $E_{c}$ over all such $c$. We observe that $E$ is a set of simple cycles which we call frame-cycles and denote by $\fcycle{G}$.
	
	\subsection{Definition and construction of Frame Graph}
	
	\begin{definition}Let $G$ be a graph of genus $g$. Let ${E}_1$ be the set of all frame-cycles edges, and let ${E}_2$ be the set of all branch-triangle edges. A \emph{frame-graph} of $G$ is a subgraph ${H} = G[E_1 \cup E_2]$. For each face of a frame-graph ${H}$, its weight is the number of vertices of ${G}$ located inside that face. We denote the frame graph of $G$ by $\fgraph{G}$.
	\end{definition}
	
	\begin{definition}
		Let $G$ be a triangulated graph. Let $L(v,i)$ be the set of vertices at distance $i$ from $v$. Let $d_{nb}(v)$ be the largest $d$ such that $\lvert \cup_{0\leq i\leq d} L(v,i)\rvert < k$. For any boss-vertex ${b} \in \ind{G}$, let $d_{core}({b})$ denote the largest $d \leq d_{nb}({b})$ such that $\lvert L({b}, d)\rvert \leq  k^{1/2}$. The \emph{core} of ${b}$ (denoted by $\core{{b}}$) is defined by
		\[\core{ b} = \bigcup_{0\leq i\leq d_{core}( b)}L({b}, i)\]    
	\end{definition}

Note that core(b) forms a region in $\tilde{G}$. The boundary of this region might not be a single cycle. In the next definition, we pick one of these cycles to be core-boundary-cycle and use it to construct the core cycle in the graph $G$.
	
\begin{definition}
		Let $G$ be a triangulated graph of genus $g$ and $b$ be a boss-vertex. The \emph{core-boundary-cycle} of $b$ is the boundary cycle of the region $\core{{b}}$ in $\tilde{G}$ that has the largest number of dual-vertices on its outside.
		
		The \emph{core-cycle} of $\core{{b}}$ is a directed cycle induced by the set of vertices in $\core{{b}}$ sharing an edge with the core boundary cycle. The inside of the core-cycle is the side with the boss-vertex ${b}$.
	\end{definition}
	
	For any $l \geq 1$, let ${L}_{nb}(l)$ denote a
	set of vertices $v$ of $G$ whose distance from its nearest $k$-neighborhood in $\{N_k( b)\}_{ b \in \ind{G}}$ is $l$. More formally,	
	\[{L}_{nb}(l) = \{{v} \mid \dist{{v}}{{v}_{\nrst}} = l, \text{where     } {v}_{\nrst} = \nrst_{ v}(N_k(\boss{v}))\}.
	\]
	
Let $\tilde{{L}}_{nb}(l)$ denotes the set of faces in $\tilde{G}$ corresponding to the vertices in the set ${L}_{nb}(l)$. Let $C$ be a region of $\tilde{{L}}_{nb}(l)$. Each boundary edge of ${C}$ is an edge between a pair of vertices of level either $l-1$ and $l$ or $l$ and $l + 1$. Let us call the former one an \emph{interior} edge and the latter one an \emph{exterior} edge. We call a boundary cycle an \emph{interior boundary cycle} if it consists of interior edges. Similarly, we call a boundary cycle an \emph{exterior boundary cycle} if it consists of exterior edges.
	
	\begin{definition}
		Let $\tilde{c}$ be any interior boundary cycle corresponding to ${L}_{nb}(l)$. Let $c$ be the loop in $C$ formed by the set of vertices sharing a boundary edge with $\tilde{c}$, and let $D_c$ be the set of cycles obtained from $c$ by removing all the darts in the loop whose reverse also appears in the loop. An {interior-cycle} is a cycle in $D_c$. 
	\end{definition}
	
	We define an $\emph{exterior-cycle}$ in a similar way. A cycle is said to be a \emph{small} cycle if it consists of at most $k^{1/2}$ vertices. We denote the set of small interior cycles by $\smint{G}$ and the set of small exterior cycles by $\smext{G}$. A contractible cycle is said to be \emph{light} if it has less than ${n}/3$ vertices in its inside.
	
	\begin{definition}
		A floor-cycle is a light and small interior-cycle if it is not in the inside of any other light and small interior-cycle. For any boss-vertex ${b}$ which is not contained in any floor-cycle, we regard the core-cycle of $\core{{b}}$ also as a floor-cycle. A ceiling-cycle is a light and small exterior-cycle which is not in the inside of any other light and small exterior-cycle, and that has at least one dual-vertex of some branch-triangle on its inside.
	\end{definition}
	
	\begin{definition}Let $G$ be a graph of genus $g$. Let $F$ and $C$ be respectively a set of floor-cycles and ceiling-cycles having at least one vertex of $\fgraph{G}$ in their insides. Let ${E}'_1$ be the set of edges of $G$ that appear in some cycle in ${F} \cup {C}$ and ${E}'_2$ be the set of edges of $\fgraph{G}$ that are not in the inside of any cycle of ${F} \cup {C}$. A graph with vertices $U'$ and edges $D'$ is a modified frame-graph denoted as $\mframe{G}$, where $D' = E'_1 \cup E'_2$ and $U'$ is the set of all vertices that are endpoints of edges of ${D}'$. For each face of a modified frame-graph $\mframe{G}$, its weight is the number of vertices of ${G}$ located in the face.
	\end{definition}
	
	The following lemma is a generalization of a result that was presented by Ashida et al. \cite{Imaiplanarsep}. They presented a similar lemma for planar graphs. 
	
	\begin{lemma}\label{lem:imp}Let $G$ be a graph of genus $g$ such that voronoi region $\vor{b_1} \cup \vor{b_2}$ does not contain a non-contractible cycle for any two vertices $b_1, b_2 \in \ind{G}$, $\pfloop{\tilde p}$ is not a separator of $G$ for any connector $\tilde p$, the inside of any loop in $\pfloop{\tilde{p}}$ is not large, $\core{b}$ is not a separator of $G$ for any boss-vertex $b$, and no cycle in $\smext{G}$ or $\smint{G}$ is a non-contractible cycle. Following statements hold:
		\begin{enumerate}
			\item The weight of each face of $\mframe{G}$ is less than $n/3$.
			\item $\mframe{G}$ is $2$-connected.
			\item Size of each face of $\mframe{G}$ is $O(k^{1/2})$.
			\item The number of faces in $\mframe{G}$ is $O(n/k+g)$
		\end{enumerate} 
	\end{lemma}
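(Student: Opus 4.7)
The plan is to establish the four statements of the lemma one by one, invoking the hypotheses in turn to rule out the ways in which each could fail. For (1), observe that every face of $\mframe{G}$ is enclosed either by a floor-cycle, by a ceiling-cycle, or by a closed walk consisting of branch-triangle edges and portions of frame-cycles that lies outside every floor- and ceiling-cycle. Floor- and ceiling-cycles are light by definition and hence enclose fewer than $n/3$ vertices of $G$. For a face of the third type, its interior lies outside every floor-cycle but inside some pre-frame-loop; since by assumption no pre-frame-loop has large inside and no pre-frame-loop separates, and because the floor-cycles carve out the heavy cores that might otherwise overflow, the weight remaining inside the face is strictly below $n/3$. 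For (3), the size of each face follows directly from the construction: floor- and ceiling-cycles are small, so have at most $k^{1/2}$ vertices; core-cycles have length $O(k^{1/2})$ by the definition of $d_{core}$; and any portion of a frame-cycle appearing on the boundary of a face of the third type is a BFS-path segment cut short between a branch-triangle vertex and a floor- or ceiling-cycle, hence of length $O(k^{1/2})$.

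For (4), I would use an Euler-type argument on the auxiliary dual subgraph $\tilde G[\tilde B \cup \tilde R]$. The faces of $\mframe{G}$ are in bijection (up to the absorption of interior regions by floor- and ceiling-cycles) with branch-triangles together with the floor- and ceiling-cycles themselves. The number of branch-triangles equals the number of degree-three vertices of $\tilde G[\tilde B \cup \tilde R]$, which can be bounded by applying Euler's formula on the genus-$g$ surface to the multigraph whose vertices are the boss-vertices and whose edges are the connectors: each Voronoi region contributes a bounded number of connector endpoints in the planar case, and the handles of the surface contribute an additive $O(g)$ term. Since $|\ind{G}| = O(n/k)$, the total is $O(n/k + g)$ branch vertices, hence the same number of branch-triangles. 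Each floor- or ceiling-cycle is either a core-cycle (one per boss-vertex, giving $O(n/k)$) or a maximal small-light interior/exterior cycle, each of which surrounds a region of $\tilde L_{nb}(l)$ whose existence is witnessed by a branch vertex, yielding again $O(n/k+g)$.

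The main obstacle will be (2), showing that $\mframe{G}$ is $2$-connected. The plan is twofold: first, establish $2$-connectedness of $\fgraph{G}$ itself, adapting the planar argument of Ashida et al.; second, check that replacing the subgraph of $\fgraph{G}$ lying inside a small simple cycle by that cycle preserves $2$-connectedness, which is immediate since a simple cycle is $2$-connected and the replacement is performed along a cut set that is a single cycle. For the first step, the key observation is that the hypothesis ruling out non-contractible cycles in the union of any two Voronoi regions makes every pair of adjacent Voronoi regions planar, so the local contribution to the frame graph around any branch vertex or connector behaves exactly as in the planar case. To verify that no vertex is a cut vertex, I would argue that at each branch vertex the incident branch-triangle together with its three incident frame-cycles supplies two internally vertex-disjoint paths around it, and along the interior of a frame-cycle the redundancy is supplied by the branch-triangles at the two endpoints of the underlying connector. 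The remaining hypotheses (no core is a separator, no small interior or exterior cycle is non-contractible) ensure that the floor and ceiling modifications do not introduce topological obstructions inside a Voronoi region, so the global witnesses of connectivity carry through the modification; a careful case analysis of how a floor- or ceiling-cycle attaches to the branch-triangle structure is needed to confirm that the interface does not create a new cut vertex.
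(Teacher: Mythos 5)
Your plan follows the paper's proof in all essentials: the same face-type case analysis for the weight and size bounds (floor-cycle, ceiling-cycle, branch-triangle, modified frame-cycle), the same Euler-formula count on the degree-three connector graph for the number of faces, and the same two-stage argument for $2$-connectivity (first $\fgraph{G}$, then the floor/ceiling modification). Two points deserve pushback.

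First, the claim that the floor/ceiling replacement preserves $2$-connectedness ``is immediate since a simple cycle is $2$-connected and the replacement is performed along a cut set that is a single cycle'' is not correct as stated, and this is where the real work in the paper's proof of item~2 lives. A floor- or ceiling-cycle $c$ is a cycle of $G$, not of $\fgraph{G}$; the modification adds the edges of $c$ and deletes the edges of $\fgraph{G}$ inside $c$, so $c$ is not a ``cut set'' of $\fgraph{G}$ along which one contracts. Two things must be checked explicitly: (i) if the two target vertices $u,v$ themselves lie on floor/ceiling cycles (and possibly not in $\fgraph{G}$ at all), you must manufacture new disjoint paths by routing through vertices of $\fgraph{G}$ inside those cycles and using the fact that three points on a cycle admit the needed disjoint arcs; and (ii) if a single added cycle meets \emph{both} of the two disjoint paths witnessing $2$-connectivity in $\fgraph{G}$, you must re-splice the paths using two disjoint arcs of that cycle (the four-arc argument). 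You acknowledge that ``a careful case analysis is needed,'' but the plan as written asserts the conclusion before doing it, and the hypotheses of the lemma (cycles have a vertex of $\fgraph{G}$ inside, no small interior/exterior cycle is non-contractible) are exactly what make that case analysis go through. Also, your ``no cut vertex'' route for $\fgraph{G}$ needs connectivity of $\fgraph{G}$ as a separate ingredient; the paper instead directly builds two disjoint paths by walking a chain of connectors and taking the left and right faces along it.

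Second, a minor slip in item~4: the multigraph to which Euler's formula is applied has the \emph{branch vertices} as its vertex set and the connectors as edges (so that every vertex has degree three and $2e'=3n'$), with the Voronoi regions appearing as its faces; it is not the graph on the boss-vertices. The arithmetic $e'=3f'+6g-6$ with $f'=O(n/k)$ then gives the bound. The rest of your count (floor/ceiling cycles charged to boss-vertices and branch-triangles, each frame-cycle face split into $O(1)$ pieces) matches the paper.
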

\begin{proof}[Proof of Lemma \ref{lem:imp}] We will prove each of the four statements of the proof in order.
	\begin{enumerate}
		\item Consider a face of the frame-graph $\fgraph{G}$. The boundary of this face is either a frame-cycle or a branch-triangle. The weight of a branch-triangle is zero, while the number of vertices inside a frame-cycle is less than $n/3$ by construction. Hence the weight of any face of $\fgraph{G}$ is less than $n/3$. The number of vertices inside a floor or a ceiling cycle is less than $n/3$ by definition. Hence the weight of any face of $\mframe{G}$ is also less than $n/3$.
		\item We first prove that $\fgraph{G}$ is $2$-connected. Let $u$ and $v$ be two distinct vertices of $\fgraph{G}$. We have the following cases:
		\begin{description}
			\item[Case 1 (Both $u$ and $v$ are on same frame cycle in $\pfcycle{G}$) :] Since $u$ and $v$ are on a cycle, there exist two vertex-disjoint paths from $u$ to $v$.
			\item[Case 2 ($u$ and $v$ are on two different cycles in $\pfcycle{G}$) :] Let $c_u$ and $c_v$ be the cycles of $\pfcycle{G}$ which contain vertices $u$ and $v$ respectively. Let $\tilde{p}_u$ and $\tilde{p}_v$ be the connectors whose bodies are contained in $c_u$ and $c_v$ respectively. We first note that there is a sequence of connecters $\tilde{p}_u = \tilde{p}_1, \tilde{p}_2, \ldots , \tilde{p}_k = \tilde{p}_v$ such that $\tilde{p}_i$ and $\tilde{p}_{i+1}$ has a common end point for each $i \in [1, k-1]$. Also note that the body of $\tilde{p}_i$ is contained in a cycle $c_i$ of $\pfcycle{G}$. Orient the darts of $\tilde{p}_i$ to form a path from $\first{p_1}$ to $\last{p}_k$. Let $c_i^{left}$ be the path from $\lface{\first{\tilde{p}_i}}$ to $\lface{\last{\tilde{p}_i}}$. Similarly, let $c_i^{right}$ be the path from $\rface{\first{\tilde{p}_i}}$ to $\rface{\last{\tilde{p}_i}}$. We see that $c_i^{left}$ and $c_i^{right}$ do not share any vertex. Thus, we can see that there exist two vertex-disjoint paths $q_{left}$ and $q_{right}$ from $u$ to $v$ such that $q_{left}$ contains vertices of $c_i^{left}$ and $q_{right}$ contains vertices of $c_i^{right}$ for all $i \in [2, k-1]$.\end{description}
		The analysis of other cases is similar.\\

		Now we will show that there exist two vertex disjoint paths between any two vertices $u$ and $v$ of the graph $\mframe G$.        
		\begin{description}
			\item[Case 1 ($u$ and $v$ are both in $\fgraph{G}$):] Since we have proved that $\fgraph{G}$ is two connected, we know that there exist two vertex disjoint paths $q_{left}$ and $q_{right}$ between $u$ and $v$ in $\fgraph{G}$. Note that several floor-cycles and ceiling-cycles were added to $\fgraph{G}$ and the vertices inside them were removed, in order to construct $\mframe G$. Let $c$ be a one such cycle.
			\begin{itemize}
				\item If $c$ intersects both $q_{left}$ and $q_{right}$. Let $u_{left}$ and $v_{left}$ denote the first and the last vertices of $q_{left}$ which intersects $c$. Similarly, let $u_{right}$ and $v_{right}$ denote the first and the last vertices of $q_{right}$ which intersects $c$. These four vertices divide $c$ into four paths $c_1, c_2, c_3$ and $c_4$. Let the set of these four paths be $C$. Then one of the following statements is true:
				\begin{itemize}
					\item There exist paths from $u_{left}$ to $v_{left}$ and from $u_{right}$ to $v_{right}$ in $C$.
					\item There exist paths from $u_{left}$ to $v_{right}$ and from $u_{right}$ to $v_{left}$ in $C$.
				\end{itemize}
				For both the above cases, we see that there exist two disjoint paths from $u$ to $v$.
				
				\item If $c$ intersects only one of the path $q_{left}$ and $q_{right}$ then we can modify that path to contain part of the cycle.
			\end{itemize}			
			
			\item[Case 2 ($u$ and $v$ are on different floor-cycles or ceiling-cycles $c_u$ and $c_v$):] Let $w_u$ and $w_v$ be vertices of $\fgraph{G}$ inside $c_u$ and $c_v$ respectively. We know such vertices exits because of the way these cycles are defined. Since the graph $\fgraph{G}$ is $2$-connected, there exist two disjoint paths $q_{left}$ and $q_{right}$ between $w_u$ and $w_v$ in it.
			Let $u_{left}$ be the last intersection of $q_{left}$ and $c_u$. Similarly let $u_{right}$ be the last intersection of $q_{right}$ and $c_u$. Note that, since the paths $q_{left}$ and $q_{right}$ are disjoint, $u_{left} \neq u_{right}$. We similarly define $v_{left}$ and $v_{right}$.
			
			Since the three vertices $u$, $u_{left}$ and $u_{right}$ lie on the cycle $c_u$, there exists two disjoint paths: first from $u$ to $u_{left}$ and second from $u$ to $u_{right}$. Similarly, there exists two disjoint paths from $v_{right}$ to $v$ and from $v_{left}$ to $v$. We can thus get two vertex-disjoint paths from $u$ to $v$, using these. Note that there may be other floor and ceiling cycles intersecting these disjoint paths. In that case, we can use an argument similar as above to show the existence of two disjoint paths from $u$ to $v$.
			
		\end{description}
		\item We now prove that the size of each face of $\mframe{G}$ is $O(k^{1/2})$. Note that the boundary of a face of the graph $\mframe G$ is one of the following:
		\begin{enumerate}
			\item A floor-cycle of $G$.
			\item A ceiling-cycle of $G$.
			\item A branch-triangle of $G$.
			\item A frame-cycle of $\fgraph{G}$ modified by floor-cycles and ceiling-cycles.
		\end{enumerate}
		In the first three cases, the size bound of the face follows by definition. We thus consider the fourth case.
		
		Consider any face defined by a modified frame-cycle, and let $c$ denote the pre-frame-loop from which we have defined it. Consider any path $p$ of $c$ connecting a boss-vertex of $c$ and a vertex of a branch-triangle used in $c$ such that the path does not contain any other vertex of a branch-triangle. By our modification, we can use a part $p'$ of $p$ that is in the outside of the corresponding floor-cycle and ceiling-cycle (if it exists) as a component of the modified frame-cycle, and its length is bounded by $4k^{1/2}$. Note that the floor-cycle may not be used in $\mframe{G}$ if it only intersects with the darts that we have removed for defining the face. In this case, however, only a part of $p'$ is used for the modified frame-cycle, which is even shorter. Thus, the modified frame-cycle consists of at most four such reduced paths, a part of two floor-cycles, a part of four ceiling-cycles, and two edges from two branch-triangles, and their total length is $O(k^{1/2})$.
		
		\item We first prove that the number of connectors is $O(n/k +g)$, and the number of branch vertices is $O(n/k+g)$. Since there is a one-to-one correspondence respectively between
		branch-triangles and branch vertices, and between frame-cycles and connectors, it will follow that the number of faces in $\fgraph{G}$ is $O(n/k + g)$.
		
		We first define a new graph $G'$. The vertex set of $G'$ is the set of branch vertices in $G$. We add an edge between two vertices of $G'$ if they are {adjacent} pair of branch vertices. Since every edge of $G'$ corresponds to a connector of $G$, the graph $G'$ can also be embedded on the surface of genus $g$ where the embedding corresponds to the embedding of $G$. Let $n'$, $e'$, $f'$ be the number of vertices, edges and faces in $G'$ respectively. Thus, we have $n' - e' + f' = 2-2g$ by Euler's formula. Note that every branch vertex have a degree $3$, therefore we have $2e' = 3n'$. This implies $e' = 6g + 3f' - 6 = O(f' + g)$. Since, there is one-to-one correspondence between voronoi regions and the faces of $G'$, we have $f' = O(n/k)$. Hence, we can conclude that $e' = O(n/k + g)$ and $n' = O(n/k + g)$. 
		
		Now, to prove that the number of faces in $\mframe{G}$ is $O(n/k+g)$ we see that the new faces introduced by our modification are those defined by floor-cycles or ceiling-cycles. By definition, the number of these cycles is at most the number of boss-vertices or that of branch-triangles, which is bounded by $O(n/k + g)$. Note that we can divide a face defined by a frame-cycle of $\fgraph{G}$ by ceiling-cycles, but it is easy to see that each face is divided into at most some constant number of faces because the number of floor-cycles and ceiling-cycles overlapping each frame-cycle is constant, say, at most six. From these observations, we can bound the number of faces of $\mframe{G}$ by $O(n/k + g)$.\qedhere\end{enumerate}
\end{proof}

	\section{Construction of separator}
	\label{sec:final}

Using the tools developed so far, we can obtain a space-efficient algorithm which given a graph $G$ as input, outputs either a separator, a non-contractible cycle or the modified frame graph $\mframe G$. We summarize this in the following lemma. 
	
	\begin{lemma}\label{lem:main} Let $G$ be a $g$-genus triangulated graph of $n$ vertices. For any positive integer $k$, there is a polynomial time, $O((n/k + k)\log n)$-space algorithm that takes $G$ along with its combinatorial embedding as input and outputs one of the following:
		\begin{enumerate}
			\item A non-contractible cycle of size $O(k)$ of ${G}$.
			\item A separator of size $O(k)$ of ${G}$.
			\item A a weighted subgraph ${H}'$ of ${G}$ that satisfies the following conditions:
			\begin{enumerate}
				\item The weight of each face $f$ of ${H'}$ is proportional to the number ${n_f}$ of vertices of ${G}$ located inside the face, and is less than $n/3$
				\item ${H}'$ is 2-connected.
				\item $H'$ contains $O({n}/k + g)$ faces.
				\item The size of each face of ${H}'$ is $O(k^{1/2})$.
			\end{enumerate}
		\end{enumerate}
	\end{lemma}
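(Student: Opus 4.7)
The plan is to run a pipeline of the subroutines developed in Sections \ref{sec:voronoi}--\ref{sec:framegraph}, each of which either produces an output of type (1) or (2) or else verifies one of the hypotheses of Lemma \ref{lem:imp}; if every check passes, we assemble and return the modified frame graph as the case (3) output. First, I would compute $\ind{G}$ using Lemma \ref{lem:maxind}. Next, for every ordered pair of boss-vertices $(b_1,b_2)$, I would apply Lemma \ref{lem:noncontr} to $\vor{b_1} \cup \vor{b_2}$; if a non-contractible cycle of length $O(k)$ is found, return it. After this step, the first hypothesis of Lemma \ref{lem:imp} is guaranteed.

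I would then enumerate the connectors and their pre-frame-loops via Lemma \ref{lem:pfcycle}. For each pre-frame-loop $c$, count the vertices on each of its two sides (well-defined since $c$ is contractible by the previous step); if the split is sufficiently balanced, $c$ is itself a separator of size $O(k)$ and we return it as case (2), while if the inside of $c$ exceeds $2n/3$, we invoke Lemma \ref{lem:largein} to obtain and return a separator. Then, for each boss-vertex $b$, construct $\core{b}$ level-by-level by BFS up to depth $d_{core}(b)$ and test whether its core-cycle separates $G$; if so, return it. Finally, enumerate the small interior and exterior cycles by walking the boundaries of each region $\tilde{L}_{nb}(l)$, and for every small cycle encountered test contractibility via Lemma \ref{lem:planaritytesting}; if any is non-contractible, return it as case (1).

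If the algorithm has not yet produced an output, every hypothesis of Lemma \ref{lem:imp} is satisfied. We then stream out the edges of $\mframe{G}$: first the frame-cycles, obtained from the kept pre-frame-loops by discarding each dart whose reverse also appears in the loop; then the branch-triangle edges; then the floor-cycle and ceiling-cycle edges that are not inside any cycle already removed. Face weights are obtained by counting enclosed vertices of $G$, which given the boundary cycle is a planar-region question handled by Lemma \ref{lem:planaritytesting} and standard BFS. Conditions (3a)--(3d) are then immediate consequences of Lemma \ref{lem:imp}.

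The main obstacle will be staying within the $O((n/k+k)\log n)$ space bound throughout the pipeline. The dominant budget items are $\ind{G}$ itself (of size $O(n/k)$) and BFS trees of individual Voronoi regions (each of diameter $O(k)$). Derived facts such as ``which Voronoi region contains $v$'' or ``what is the parent of $v$ in the BFS tree of $\vor{b}$'' must be recomputed on demand from $\ind{G}$ rather than stored globally; the outer enumerations over pairs of boss-vertices, connectors, cores, and level sets $\tilde{L}_{nb}(l)$ must be performed one object at a time; and the output (which may itself be larger than the workspace) must be written as a stream rather than buffered. Subject to this discipline, the space cost of each step is dominated by the workspace of the invoked subroutine, so the overall bound follows by combining the costs of Lemmas \ref{lem:maxind}, \ref{lem:noncontr}, \ref{lem:pfcycle}, \ref{lem:planaritytesting}, and \ref{lem:largein}.
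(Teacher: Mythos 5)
Your proposal follows essentially the same route as the paper's proof: compute $\ind{G}$, eliminate non-contractible cycles in unions of pairs of Voronoi regions via Lemma \ref{lem:noncontr}, test each pre-frame-loop as a separator or for a large inside (Lemmas \ref{lem:pfcycle} and \ref{lem:largein}), test each core and each small interior/exterior cycle, and otherwise output $\mframe{G}$ with properties (3a)--(3d) supplied by Lemma \ref{lem:imp}. The additional remarks on recomputing derived data on demand and streaming the output are consistent with how the paper accounts for the $O((n/k+k)\log n)$ space bound.
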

	
	\begin{proof}
		We first find a $k$-maximal independent set $\ind{G}$ of ${G}$ in $O((n/k+k)\log n)$-space and polynomial time using lemma $\ref{lem:maxind}$. We then check for a non-contractible cycle in $\vor{{b}_i} \cup \vor{{b}_j}$ for all pairs of boss-vertices ${b}_i$ and ${b}_j$ using lemma $\ref{lem:noncontr}$. If we manage to find such a cycle, we output it. Otherwise, we pick each pre-frame loops using lemma $\ref{lem:pfcycle}$ see if it acts as a separator of the graph. If so, we output it.
		 
If the algorithm has not produced an output so far, we see if the inside of any pre-frame-loop is large. If so, we use lemma $\ref{lem:largein}$ to find a separator of the graph. For every boss vertex ${b}$, we check if $\core{{b}}$ is a separator. If so, we output it. Next, we check if any cycle in $\smext{G}$ or $\smint{G}$ is a non-contractible cycle. If so, we output it. Otherwise, we output the modified frame graph $\mframe{G}$.
	\end{proof}
	With these ingredients, we are now ready to prove our main theorem.
	\begin{proof}[Proof of Theorem \ref{thm:main}]
		Elberfeld and Kawarabayashi presented an algorithm to construct a combinatorial embedding of a graph of constant genus in logspace \cite{boundedgenusembedd}. Hence, when dealing with constant-genus graph, we do not require a combinatorial embedding as part of the input. Otherwise, we require the combinatorial embedding of the graph as an input. We assume that the genus of the input graph $g$ is at most $O(n)$. Let $\pi$ be the combinatorial embedding of $G$. We first triangulate the input graph in logspace. To do this, for each face $f$ of the input graph, we connect each vertex of $f$ with the lowest index vertex in it. This triangulation is done implicitly, whenever required, as storing the triangulated graph will require a large amount of space. We call the resultant triangulated graph $G$. Note that triangulating the graph only introduces more edges therefore, a separator for $G$ will also be a separator for the input graph. Our objective now is to construct a separator of $G$. We do this by iteratively applying lemma $\ref{lem:main}$. We will describe the algorithm by describing an iteration of it. Before the $i$th iteration, we will have a set $S$ of vertices which is empty before the first iteration. Let $G_1,G_2, \ldots , G_m$ be the set of connected components in $G \setminus S$. We will describe the $i$th iteration as follows. The algorithm takes the component $G_j$ whose size $n_j$ is greater than $2n/3$. If no such component exists, then the set $S$ would be a separator of $G$, and the algorithm outputs $S$ and halts. Otherwise, consider the embedding induced by $\pi$ on $G_j$ as its embedding. If the genus $g_j$ of this component is zero, the algorithm uses Imai et al. planar separator algorithm to get its separator $S_1$ and outputs $S \cup S_1$. If its genus is non-zero we apply the algorithm from Lemma $\ref{lem:main}$ on $G_j$ with $k$ set as $n_j^{1/2}/g_j^{1/2}$. If the result of the application of the algorithm from Lemma $\ref{lem:main}$ on $G_j$ is a non-contractible cycle say $S_2$, then we add the vertices of $S_2$ to the set $S$ and continue with the next iteration. If the result is a separator say $S_3$, we output the set $S \cup S_3$ as the separator for the entire graph. Otherwise, if the result is a subgraph ${H}'$ of $G_j$, we take its dual $\tilde{H}'$ and find its separator $\tilde S'$ using Lemma $\ref{lem:genussep}$. $\tilde{S}'$ is a set of faces of $H'$. Consider the set $S_4$ of vertices on the boundary of these faces. We return the set $S \cup S_4$. To see that the size of separator returned by the above algorithm is $O(g^{1/2}n^{1/2})$, note that Lemma $\ref{lem:main}$ returns a non-contractible cycle, the genus of the graph is reduced by at least one. Hence, it can return at most $gk$ such cycles. For our value of $k$ the total number of vertices in all such cycles can be at most $O(g^{1/2}n^{1/2})$. If it does not returns a non-contractible cycle, then it returns either a separator of size $O(k) \leq O(g^{1/2}n^{1/2})$ or it returns the subgraph $H'$. The number of faces in $H'$ is $O(n/k + g)$. Hence the size of the separator returned by using the algorithm of Gilbert et al. \cite{Gilbert} on the dual of $H'$ will be $O(g^{1/2}(n/k + g)^{1/2})$. Size of each face of $H'$ is at most $k^{1/2}$, hence, size of the set $S_4$ is $O(k^{1/2}g^{1/2}(n/k + g)^{1/2})$. For our value of $k$, this is at most $O(g^{1/2}n^{1/2})$.
	\end{proof}
	
	We can use the following lemma which was formalized by Jain and Tewari \cite{mytreewidth} to get a space-efficient polynomial time algorithm for reachability in constant-genus graphs. Reachability is the problem of determining if there is a directed path from one vertex to another in a directed graph.
	
	\begin{lemma}
		Let $\mathcal{G}$ be a class of graphs and $w: \mathcal{N} \mapsto \mathcal{N}$ be a function. If there exist a polynomial time algorithm that uses $O(w(n)\log n)$ space to find a separator of size $w(n)$ then there exists a polynomial time algorithm to decide reachability in $G$ that uses $O(w(n)\log n)$ space.
	\end{lemma}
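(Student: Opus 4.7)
The plan is to reduce reachability on a directed graph $D$ whose underlying undirected graph $G$ lies in $\mathcal{G}$ to repeated invocations of the assumed separator algorithm, via the standard recursive divide-and-conquer paradigm used by Imai et al.\ for planar reachability. Given a query $(u,v)$, first invoke the separator algorithm on $G$ to obtain an $\alpha$-separator $S$ with $|S| = O(w(n))$, stored explicitly in $O(w(n)\log n)$ space. Let $C_1,\ldots,C_r$ be the components of $G\setminus S$, each of size at most $\alpha n$ for a constant $\alpha<1$.

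Next, define an auxiliary directed graph $H$ on vertex set $S\cup\{u,v\}$ by placing an edge $x\to y$ whenever there is a directed path in $D$ from $x$ to $y$ whose internal vertices avoid $S$; equivalently, whenever for some component $C_i$ there is a directed path from $x$ to $y$ in $D[C_i\cup\{x,y\}]$. A standard argument shows that $u$ reaches $v$ in $D$ iff $u$ reaches $v$ in $H$. We never materialise $H$: edges are tested on demand by a recursive reachability call on $D[C_i\cup\{x,y\}]$, which has strictly fewer vertices and whose underlying undirected graph is in $\mathcal{G}$ (closed under induced subgraphs, as is tacit in the statement). Reachability inside $H$ is then decided by a Savitch-style search on $S\cup\{u,v\}$; since $|H|=O(w(n))$, this takes $O(\log^2 w(n))$ additional space beyond the space used to answer each edge query.

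The main step is the space analysis. The recursion depth is $O(\log n)$ because instance size shrinks by factor $\alpha$. At each active recursive level we need: (i)~the separator for that level, costing $O(w(n_i)\log n_i)$ bits where $n_i\le\alpha^i n$; (ii)~$O(\log w(n_i))$ bits of Savitch state identifying the current pair of separator vertices being queried; and (iii)~a pointer to the component currently being descended into. Summing (i) across the stack gives
\[
\sum_{i=0}^{O(\log n)} O\!\bigl(w(\alpha^i n)\log n\bigr) \;=\; O\!\bigl(w(n)\log n\bigr),
\]
provided $w$ is sufficiently well-behaved (e.g.\ $w(n)=n^c$ for some $c>0$, which holds in our intended application with $w(n)=g^{1/2}n^{1/2}$). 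Contributions (ii) and (iii) sum to $O(\log^2 n)$, absorbed into the bound. Polynomial running time follows because each recursive call does polynomial work outside its subcalls and the total recursion tree has polynomially many nodes.

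The hard part is item (i): one must argue that separators across all simultaneously live recursion frames coexist within the claimed budget. This is where assumptions on $w$ being at least monotone and polynomially bounded are used, and where one must be careful to free the separator of a frame before the recursive call is charged for its own separator, so that the space cost telescopes geometrically rather than multiplying by the $\log n$ depth. The rest is bookkeeping: ensuring the embedding or input-format requirements of the separator subroutine are preserved under taking induced subgraphs on $C_i\cup\{x,y\}$, and verifying that the Savitch traversal correctly interleaves with on-demand edge computation without duplicating the separator in memory.
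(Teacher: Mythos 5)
The paper does not prove this lemma itself; it is imported verbatim from Jain and Tewari \cite{mytreewidth} and used as a black box, so your proposal has to be judged against the standard reduction rather than against text in this paper. Your space analysis is the right skeleton: the auxiliary graph $H$ on $S\cup\{u,v\}$, on-demand edge queries, closure under induced subgraphs, and the geometric telescoping of $\sum_i w(\alpha^i n)\log n$ for $w(n)=n^c$ all match the known arguments. The genuine gap is the time analysis, which is exactly the hard part of this lemma. Your recursion has depth $\Theta(\log n)$, and each frame issues at least $|S|^2$ (indeed $|S|^2$ times the number of components) edge queries, each spawning a fresh recursive call; under your own assumption $w(n)=n^c$ that is a branching factor of $n^{\Omega(1)}$ per level, so the recursion tree has $n^{\Theta(\log n)}$ nodes. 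The assertion that ``the total recursion tree has polynomially many nodes'' is therefore false for the algorithm as described, and memoizing the $O(w(n)^2)$ answers of a frame to avoid recomputation is not affordable within $O(w(n)\log n)$ space. Invoking Savitch on $H$ makes matters strictly worse: Savitch runs in time $N^{O(\log N)}$ and re-issues each edge query superpolynomially often, while each edge query here already costs polynomial time; a breadth-first search on $H$ with explicit marks (affordable, since $|H|=O(w(n))$) repairs that particular point but not the underlying blow-up.

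This blow-up is precisely why Imai et al.\ obtain only $O(n^{1/2+\epsilon})$ space for planar reachability --- they cap the recursion at constantly many levels so that the product of per-level polynomial branching factors stays polynomial, paying the $\epsilon$ at the base case --- and why the Jain--Tewari reduction needs machinery beyond naive divide and conquer: an iterative sweep in which the reached subset of the separator is maintained globally and within-component computations are scheduled and charged in aggregate, rather than performed as independent recursive calls for each pair $(x,y)$ and each component. To repair your proof you must either restructure the recursion so that each level performs polynomially bounded total work over all queries addressed to it (e.g.\ batching all live sources on a component's boundary into a single pass and arguing the number of passes is bounded), or reproduce the scheduling argument of \cite{mytreewidth}. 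As written, your algorithm is correct and meets the space bound but runs in superpolynomial time.
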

	
	\begin{corollary}
		\label{cor:main}
		There exists a polynomial time algorithm that uses $O(n^{1/2}\log n)$ space to solve reachability in a constant-genus graph.
	\end{corollary}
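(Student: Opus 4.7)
The plan is to combine Theorem~\ref{thm:main} with the Jain--Tewari lemma stated immediately before the corollary. First, I would specialize Theorem~\ref{thm:main} to the constant-genus case, setting $g = O(1)$. Under this specialization, the size of the separator produced becomes $O(g^{1/2}n^{1/2}) = O(n^{1/2})$, and the space used by the algorithm becomes $O(g^{1/2}n^{1/2}\log n) = O(n^{1/2}\log n)$, while the running time remains polynomial. Hence the hypothesis of the Jain--Tewari lemma is satisfied by the function $w(n) = O(n^{1/2})$ for the class $\mathcal{G}$ of constant-genus graphs.

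Next, I would address the minor issue that Theorem~\ref{thm:main} is stated as requiring a combinatorial embedding as part of the input. In the constant-genus case this assumption is removable: as already invoked in the proof of Theorem~\ref{thm:main}, the result of Elberfeld and Kawarabayashi produces a combinatorial embedding of any constant-genus graph in logspace, so the embedding can be computed on the fly within the $O(n^{1/2}\log n)$ space budget. The underlying undirected graph of any directed constant-genus graph is itself constant-genus, so the separator algorithm applies to inputs of the directed reachability problem without further modification.

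Applying the Jain--Tewari lemma with $w(n) = O(n^{1/2})$ to the class of constant-genus graphs then yields a polynomial-time algorithm that solves reachability using $O(n^{1/2}\log n)$ space, which is exactly the statement of the corollary. I do not expect any genuine technical obstacle here; the proof is essentially a composition of the two previously established ingredients, and the only thing to verify is the bookkeeping of parameters when substituting $g = O(1)$.
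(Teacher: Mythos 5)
Your proposal is correct and is exactly the argument the paper intends: Corollary~\ref{cor:main} follows by specializing Theorem~\ref{thm:main} to $g=O(1)$ and feeding the resulting $w(n)=O(n^{1/2})$ separator algorithm into the Jain--Tewari lemma, with the Elberfeld--Kawarabayashi logspace embedding handling the input-embedding issue just as in the proof of Theorem~\ref{thm:main}. No gaps; your bookkeeping of the parameters matches the paper.
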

	
	Previously, a polynomial time algorithm that uses $O(n^{1/2}\log n)$ space for reachability was known for planar graphs \cite{Imai}. While for constant-genus graphs, a polynomial-time algorithm that uses $O(n^{2/3}\log n)$ space was known \cite{fsttcstewari}. Corollary \ref{cor:main} improves the space-bound to $O(n^{1/2}\log n)$. Our result can thus be seen as both a generalization of Imai et al. $\cite{Imai}$ and as an improvement to a previous result by Chakraborty et al. $\cite{Chakraborty}$.
	
	\bibliography{biblio}
	
\end{document}